\theoremstyle{plain}
\newtheorem{theorem}{Theorem}[section]
\newtheorem{lemma}[theorem]{Lemma}
\newtheorem{corollary}[theorem]{Corollary}
\def\ep{\varepsilon}
\def\DD{\mathcal{D}}
\def\m{\!-\!}
\def\pl{\!+\!}
\def\eq{\!=\!}
\newcommand{\p}[1]{(#1)}
\newcommand{\pc}[1]{\{#1\}}
\newcommand{\abs}[1]{|#1|}
\newcommand{\set}[1]{\pc{#1}}
\newcommand{\setc}[2]{\pc{#1 : #2}}
\newcommand{\expp}[1]{\exp\p{#1}}
\newcommand{\logp}[1]{\log\p{#1}}
\renewcommand{\P}[1]{P\p{#1}}
\newcommand{\E}[1]{E\p{#1}}
\newcommand{\OO}[1]{O\p{#1}}
\newcommand{\OM}[1]{\Omega\p{#1}}
\newcommand{\OT}[1]{\Omega\p{#1}}
\newcommand{\poly}[1]{\text{poly}\p{#1}}
\newcommand{\rank}[2]{R\p{#1,#2}}
\newcommand{\rhat}[2]{\hat{R}\p{#1,#2}}
\newcommand{\eva}[1]{\alpha_{#1}}
\newcommand{\evb}[1]{\beta_{#1}}
\title{A randomized online quantile summary in $\OO{\frac{1}{\ep} \log \frac{1}{\ep}}$ words}
\author{
David Felber
\thanks{University of California at Los Angeles. \texttt{dvfelber@cs.ucla.edu}.}
\and
Rafail Ostrovsky
\thanks{University of California at Los Angeles. \texttt{rafail@cs.ucla.edu}.}
}
\date{}
\begin{document}

\maketitle

\begin{abstract}

  A quantile summary is a data structure that approximates to $\ep$-relative
  error the order statistics of a much larger underlying dataset.

  In this paper we develop a randomized online quantile summary for the cash
  register data input model and comparison data domain model that uses
  $\OO{\frac{1}{\ep} \log \frac{1}{\ep}}$ words of memory. This improves upon
  the previous best upper bound of $\OO{\frac{1}{\ep} \log^{3/2} \frac{1}{\ep}}$
  by Agarwal et. al. (PODS 2012). Further, by a lower bound of Hung and Ting
  (FAW 2010) no deterministic summary for the comparison model can outperform
  our randomized summary in terms of space complexity. Lastly, our summary has
  the nice property that $\OO{\frac{1}{\ep} \log \frac{1}{\ep}}$ words suffice
  to ensure that the success probability is $1 - e^{-\poly{1/\ep}}$.

\end{abstract}

\section{Introduction}
\label{sec:introduction}

A quantile summary $S$ is a fundamental data structure that summarizes an
underlying dataset $X$ of size $n$, in space much less than $n$. Given a query
$\phi$, $S$ returns a sample $y$ of $X$ such that the rank of $y$ in $X$ is
(probably) approximately $\phi n$. Quantile summaries are used in sensor
networks to aggregate data in an energy-efficient manner and in database query
optimizers to generate query execution plans.

Quantile summaries have been developed for a variety of different models and
metrics. The data input model we consider is the standard online cash register
streaming model, in which a new item is added to the dataset at each new
timestep, and the total number of items is not known until the end. The data
domain model we consider is the comparison model, in which stream items come
from an arbitrary ordered domain (and specifically, not necessarily from the
integers).

Formally, our quantile summary problem is defined over a totally ordered domain
$\DD$ and by an error parameter $\ep \le 1/2$. There is a dataset $X$ that is
initially empty. Time occurs in discrete steps. In timestep $t$, stream item
$x_t$ arrives and is then processed, and then any quantile queries $\phi$ in
that step are received and processed. To be definite, we pick the first timestep
to be $1$. We write $X_t$ or $X(t)$ for the $t$-item prefix stream $x_1 \ldots
x_t$ of $X$. The goal is to maintain at all times $t$ a summary $S_t$ of the
dataset $X_t$ that, given any query $\phi$ in $(0, 1]$, can return a sample $y =
  y\p{\phi}$ so that $\abs{\rank{y}{X_t} - \phi t} \le \ep t$, where
  $\rank{a}{Z}$ is the \emph{rank of item $a$ in set $Z$}, defined as
  $\abs{\setc{z \in Z}{a \le z}}$. For randomized summaries, we only require
  that $\forall t \forall \phi,\; \P{\abs{\rank{y}{X_t} - \phi t} \le \ep t}
  \,\ge\, 2/3$; that is, $y$'s rank is only probably close to $\phi t$, not
  definitely close. In fact, it will be easier to deal with the rank directly,
  so we define $\rho = \phi t$ and use that in what follows.

\subsection{Previous work}

The two most directly relevant pieces of prior work are randomized online
quantile summaries for the cash register/comparison model. Aside from oblivious
sampling algorithms (which require storing $\OT{1/\ep^2}$ samples) we are
unaware of any other randomized online quantile summaries that work in the
comparison model.

The newer of the two is that of Agarwal, Cormode, Huang, Phillips, Wei, and Yi
\cite{ACHPWY2012} \cite{ACHPWY2013}. Among other results, Agarwal et. al.
develop a randomized online quantile summary for the cash register/comparison
model that uses $\OO{\frac{1}{\ep} \log^{3/2} \frac{1}{\ep}}$ words of memory.
This summary has the nice property that any two such summaries can be combined
to form a summary of the combined underlying dataset without loss of accuracy or
increase in size.

The earlier such summary is that of Manku, Rajagopalan, and Lindsay
\cite{MRL1999}, which uses $\OO{\frac{1}{\ep} \log^2 \frac{1}{\ep}}$ space. At a
high level, their algorithm downsamples the input stream in a non-uniform way
and feeds the downsampled stream into a deterministic summary, which
periodically adjusts the downsampling rate.

We note here that our algorithm is inspired by the algorithm of Manku et. al.
but has important differences. We defer a discussion of the similarities and
differences to section \ref{sec:discussion} after the presentation of our
algorithm in section \ref{sec:online}.

For the comparison model, the best deterministic online summary to date is the
(GK) summary of Greenwald and Khanna \cite{GK2001}, which uses
$\OO{\frac{1}{\ep} \log \ep n}$ space. This improved upon a deterministic (MRL)
summary of Manku, Rajagopalan, and Lindsay \cite{MRL1998} and a summary implied
by Munro and Paterson \cite{MP1978}, which use $\OO{\frac{1}{\ep} \log^2 \ep n}$
space.

A more restrictive domain model than the comparison model is the bounded
universe model, in which elements are drawn from the integers $\{1, \ldots,
u\}$. For this model there is a deterministic online summary by Shrivastava,
Buragohain, Agrawal, and Suri \cite{SBAS2004} that uses $\OO{\frac{\log
    u}{\ep}}$ space.

Not much exists in the way of lower bounds for this problem. There is a simple
lower bound of $\OM{1/\ep}$ which intuitively comes from the fact that no one
sample can satisfy more than $2 \ep n$ different rank queries. For the
comparison model, Hung and Ting \cite{HT2010} developed a deterministic
$\OM{\frac{1}{\ep} \log \frac{1}{\ep}}$ lower bound. Whether this bound can be
extended to hold for our weaker probabilistic guarantee, and whether our
algorithm can be modified to satisfy the stronger deterministic guarantee, are
both open questions.

\subsection{Our results}

In the next section we describe a simple $\OO{\frac{1}{\ep} \log \frac{1}{\ep}}$
streaming summary that is online except that it requires $n$ to be given up
front and that it is unable to process queries until it has seen a constant
fraction of the input stream. In section \ref{sec:online} we develop this simple
summary into a fully online summary that can answer queries at any point in
time. We close in section \ref{sec:discussion} by examining the similarities and
differences between our summary and previous work and discuss a design approach
for similar streaming problems.

\section{A simple streaming summary}
\label{sec:simple}

Before we describe our algorithm we must first describe its two main components
in a bit more detail than was used in the introduction. The two components are
Bernoulli sampling and the GK summary \cite{GK2001}.

\subsection{Bernoulli sampling}

Bernoulli sampling downsamples a stream $X$ of size $n$ to a sample stream $S$
by choosing to include each next item into $S$ with independent probability
$m/n$. (As stated this requires knowing the size of $X$ in advance.) At the end
of processing $X$, the expected size of $S$ is $m$, and the expected rank of any
sample $y$ in $S$ is $\E{\rank{y}{S}} = \frac{m}{n} \rank{y}{X}$.
In fact, for any times $t \le n$ and partial streams $X_t$ and $S_t$, where
$S_t$ is the sample stream of $X_t$, we have $\E{|S_t|} = m t / n$ and
$\E{\rank{y}{S_t}} = \frac{m}{n} \rank{y}{X_t}$. To generate an estimate for
$\rank{y}{X_t}$ from $S_t$ we use $\rhat{y}{X_t} = \frac{n}{m} \rank{y}{S_t}$.
The following theorem bounds the probability that $S$ is very large or that
$\rhat{y}{X_t}$ is very far from $\rank{y}{X_t}$ (for any given time $t \ge
n/64$, but not for all times $t = n/64 \ldots n$ combined). The proof is
folklore, a simple application of Chernoff bounds.

\begin{theorem}
  \label{thm:bernoulli}
  For all times $t \ge n/64$, $\P{\abs{S_t} > 2 t m / n} \,<\, \expp{-m/192}$.

  Further, for all times $t \ge n/64$ and items $y$, $\P{\abs{\rhat{y}{X_t} -
      \rank{y}{X_t}} > \ep t / 8} \,<\, 2 \expp{-\ep^2 m/12288}$.
\end{theorem}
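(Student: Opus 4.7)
Both bounds will follow from the standard multiplicative Chernoff inequality applied to sums of i.i.d.\ $\text{Bernoulli}\p{m/n}$ indicators; the role of the hypothesis $t \ge n/64$ is to convert the resulting exponents, which are naturally proportional to expected values like $tm/n$, into the cleaner exponents in $m$ alone that appear in the statement.

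For the first bound, $\abs{S_t}$ is the sum of $t$ independent $\text{Bernoulli}\p{m/n}$ inclusion indicators (one per item of $X_t$), with mean $\mu = tm/n \ge m/64$. Applying the Chernoff upper-tail bound with relative deviation $\delta = 1$ yields $\P{\abs{S_t} > 2\mu} \le \expp{-\mu/3} \le \expp{-m/192}$, as claimed.

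For the second bound, observe that $\rank{y}{S_t}$ counts those of the $r := \rank{y}{X_t}$ items in $X_t$ that are at most $y$ \emph{and} land in $S_t$, so it is a sum of $r$ independent $\text{Bernoulli}\p{m/n}$ indicators with mean $\mu = rm/n$. Scaling out the factor $n/m$ in the definition of $\rhat{y}{X_t}$, the target event rewrites as $\abs{\rank{y}{S_t} - \mu} > a$ with $a := \ep tm/(8n)$. The main calculation is a two-sided Chernoff with relative deviation $\delta := a/\mu$: since $r \le t$ we have $\mu \le tm/n$, so
\[
\frac{a^2}{3\mu} \;\ge\; \frac{\ep^2 tm}{192 n} \;\ge\; \frac{\ep^2 m}{12288},
\]
using $t \ge n/64$, which gives the claimed $2\expp{-\ep^2 m/12288}$.

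The only obstacle is the edge case $\delta > 1$, which occurs precisely when $r < \ep t/8$ (i.e., $y$ sits very near the bottom of $X_t$), since then the two-sided symmetric form of Chernoff does not apply verbatim. However, in that regime $\mu < a$, so the lower-tail event $\rank{y}{S_t} < \mu - a$ is vacuous and only the upper tail contributes; the $\delta \ge 1$ form of Chernoff gives $\expp{-\delta\mu/3} = \expp{-a/3} \le \expp{-\ep m/1536}$, which is strictly smaller than $2\expp{-\ep^2 m/12288}$ for $\ep \le 1$. So the stated bound absorbs both cases, and no deeper idea beyond textbook Chernoff is needed.
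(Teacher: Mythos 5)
Your proof is correct and follows essentially the same route as the paper: a Chernoff bound on $\abs{S_t}$ with $\delta = 1$, and a two-sided Chernoff bound on $\rank{y}{S_t}$ scaled by $n/m$, both converted to exponents in $m$ alone via $t \ge n/64$. The paper compresses the second half using the $\min\{\delta,\delta^2\}$ form and leaves the $\delta > 1$ case implicit; your explicit treatment of that edge case (and the observation that the lower tail is vacuous there) is a clean way to close the gap the paper glosses over, but it is not a different argument.
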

\begin{proof}
  For the first part, $\P{\abs{S_t} > 2 t m / n} \,<\, \expp{-t m / 3 n} \,<\,
  \expp{-m / 192}$ (since $t \ge n/64$).

  For the second part, $\P{\abs{\rhat{y}{X_t} - \rank{y}{X_t}} > \ep t / 8}$ is
  equal to $\P{\abs{\rank{y}{S_t} - \E{\rank{y}{S_t}}} > \ep t m / 8 n}$. The
  Chernoff bound is $\P{\abs{\rank{y}{S_t} - \E{\rank{y}{S_t}}} > \delta
    \E{\rank{y}{S_t}}} \,<\, 2 \expp{-\min\set{\delta,\delta^2}
    \E{\rank{y}{S_t}} / 3}$. Here, $\delta = \ep t / 8 \rank{y}{S_t}$, so $P < 2
  \expp{-\ep^2 t^2 m / 192 n \E{\rank{y}{S_t}}} \le 2 \expp{-\ep^2 m / 12288}$.
\end{proof}

This means that, given any $1 \le \rho \le t$, if we return the sample $y \in
S_t$ with $\rank{y}{S_t} = \rho m / n$, then $\rank{y}{X_t}$ is likely to be
close to $\rho$.

\subsection{GK summary}

The GK summary is a deterministic summary that can answer queries to relative
error, over any portion of the received stream. If $G_t$ is the summary after
inserting the first $t$ items $X_t$ from stream $X$ into $G$ then, given any $1
\le \rho \le t$, $G_t$ can return a sample $y \in X_t$ so that
$\abs{\rank{y}{X_t} - \rho} \le \ep t / 8$. Greenwald and Khanna guarantee in
\cite{GK2001} that $G_t$ uses $\OO{\frac{1}{\ep} \logp{\ep t}}$ words. We call
this the \emph{GK guarantee}.

\subsection{Our summary}

We combine Bernoulli sampling with the GK summary by downsampling the input data
stream $X$ to a sample stream $S$ and then feeding $S$ into a GK summary $G$. It
looks like this:

\begin{figure}[H]
  \includegraphics{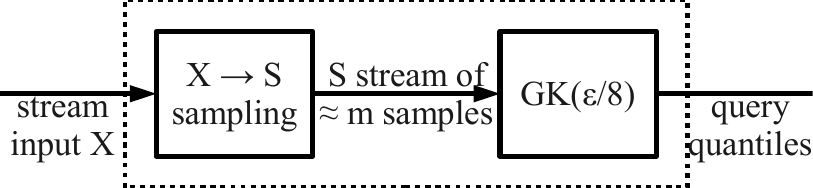}
  \centering
  \caption{The big picture.}
  \label{fig:figure1}
\end{figure}

The key reason this gives us a small summary is that we never need to store $S$;
each time we sample an item into $S$ we immediately feed it into $G$. Therefore,
we only use as much space as $G\p{S\p{X_t}}$ uses. In particular, as long as $m
= \OO{\poly{1/\ep}}$, we use only $\OO{\frac{1}{\ep} \log \frac{1}{\ep}}$ words.

To answer a query $\rho$ for $X_t$ we ask $G_t$ the query $\rho m / n$ and
return the resulting sample $y$. There is a slight issue in that $\rho m / n$
may be larger than $\abs{S}$; but if the approximation guarantee holds for the
largest item in $X_t$ then $\rho m / n < \p{t \pl \ep t/8} m / n$, so using
$\min\pc{\rho m / n, \abs{S}}$ instead will not cause more than $\ep/8$ relative
error in the approximation.

The probability that our sample stream $S_t$ is not too big (uses more than $2 t
m / n$ samples) is at least $1 - \expp{-m/192}$. If this happens to be the case
then the probability that all of its samples $y$ are good (have
$\abs{\rank{y}{S_t} - \E{\rank{y}{S_t}}} \le \ep t m / 8 n$) is at least $1 - 4
m \expp{- \ep^2 m / 12288}$ by theorem \ref{thm:bernoulli} and the union bound.
Choosing $m \ge \frac{300000 \ln 1/\ep}{\ep^2}$ suffices to guarantee that both
events occur with total probability at least $2/3$.

Further, if both $S_t$ events occur then the total error introduced by both
$S_t$ and $G_t$ is at most $\ep t / 2$. Suppose that $G_t$ returns $y$ when
given $\rho m / n$. This means that $\abs{\rank{y}{S_t} - \rho m / n} \le \ep
\abs{S_t} \le \ep \p{2 t m / n} / 8$ by the GK guarantee. Since both events for
$S_t$ occur, we also have $\abs{\rank{y}{S_t} - \frac{m}{n} \rank{y}{X_t}} \le
\ep t m / 4 n$ (and only $\ep t m / 8 n$ in the case that we don't truncate
$\rho m / n$ to $\abs{S}$). Thus, $\abs{\frac{m}{n} \rank{y}{X_t} - \rho m / n}
\le \ep t m / 2 n$. Equivalently, $\abs{\rank{y}{X_t} - \rho} \le \ep t / 2$.

\subsection{Caveats}

There are two serious issues with this summary. The first is that it requires us
to know the value of $n$ in advance to perform the sampling. Also, as a
byproduct of the sampling, we can only obtain approximation guarantees after we
have seen at least $1/64$ (or at least some constant fraction) of the items.
This means that while the algorithm is sufficient for approximating order
statistics over streams stored on disk, more is needed to get it to work for
online streaming applications, in which (1) the stream size $n$ is not known in
advance, and (2) queries can be answered approximately at all times $t \le n$
and not just when $t \ge n/64$.

Adapting the idea of our basic streaming summary to work online constitutes the
next section and the bulk of our contribution.
We start with a high-level overview of our online summary algorithm.
In section \ref{sec:online-alg} we formally define an initial version of our
algorithm whose expected size at any given time is $\OO{\frac{1}{\ep} \log
  \frac{1}{\ep}}$ words.
In section \ref{sec:online-error} we show that our algorithm gurantees that
$\forall n \forall \rho,\; \P{\abs{\rank{y}{X_n} - \rho} \le \ep n} \,\ge\, 1 -
\expp{-1/\ep}$.
In section \ref{sec:online-spacetime} we discuss the slight modifications
necessary to get a deterministic $\OO{\frac{1}{\ep} \log \frac{1}{\ep}}$ space
complexity, and also perform a time complexity analysis.

\section{An online summary}
\label{sec:online}

Our algorithm works in \emph{rows}, which are illustrated in appendix
\ref{sec:appendix-online-alg-diagram}. Row $r$ is a summary of the first $2^r 32
m$ stream items. Since we don't know how many items will actually be in the
stream, we can't start all of these rows running at the outset. Therefore, we
start each row $r \ge 1$ once we have seen $1/64$ of its total items. However,
since we can't save these items for every row we start, we need to construct an
approximation of this fraction of the stream, which we do by using the summary
of the previous row, and join this approximating stream with the new items that
arrive while the row is live. We then wait until the row has seen a full half of
its items before we permit it to start answering queries; this dilutes the
influence of approximating the $1/64$ of its input that we couldn't store.

Operation within a row is very much like the operation of our fixed-$n$
streaming summary. We feed the joint approximate prefix + new item stream
through a Bernoulli sampler to get a sample stream, which is then fed into a GK
summary (which is stored). After row $r$ has seen half of its items, its GK
summary becomes the one used to answer quantile queries. When row $r+1$ has seen
$1/64$ of \emph{its} total items, row $r$ generates an approximation of those
items from its GK summary and feeds them as a stream into row $r+1$.

Row $0$ is slightly different in order to bootstrap the algorithm. There is no
join step since there is no previous row to join. Also, row $0$ is active from
the start. Lastly, we get rid of the sampling step so that we can answer queries
over timesteps $1 \ldots m/2$.

After the first $32 m$ items, row $0$ is no longer needed, so we can clean up
the space used by its GK summary. Similarly, after the first $2^r 32 m$ items,
row $r$ is no longer needed. The upshot of this is that we never need storage
for more than six rows at a time. Since each GK summary uses $\OO{\frac{1}{\ep}
  \log \frac{1}{\ep}}$ words, the six live GK summaries use only a constant
factor more.

Our error analysis, on the other hand, will require us to look back as many as
$\OT{\log 1/\ep}$ rows to ensure our approximation guarantee. We stress that we
will not need to actually \emph{store} these $\OT{\log 1/\ep}$ rows for our
guarantee to hold; we will only need that they didn't have any bad events (as
will be defined) when they \emph{were} alive.

\subsection{Algorithm description}
\label{sec:online-alg}

Our algorithm works in rows. Each row $r$ has its own copy $G_r$ of the GK
algorithm that approximates its input to $\ep/8$ relative error. For each row
$r$ we define several streams: $A_r$ is the prefix stream of row $r$, $B_r$ is
its suffix stream, $R_r$ is its prefix stream replacement (generated by the
previous row), $J_r$ is the joint stream $R_r$ followed by $B_r$, $S_r$ is its
sample stream, and $Q_r$ is a one-time stream generated from $G_r$ by querying
it with ranks
$\rho_1 \ldots \rho_{8/\ep}$,
where $\rho_q = q \p{\ep/8} \p{m/64}$.

The prefix stream $A_r = X\p{2^{r-1} m}$ for row $r \ge 1$, importantly, is not
directly received by row $r$. Instead, at the end of timestep $2^{r-1} m$, row
$r \m 1$ generates $Q_{r-1}$ and duplicates each of those $8 / \ep$ items
$2^{r-1} \ep m / 8$ times to get the replacement prefix $R_r$, which is then
immediately fed into row $r$ before timestep $2^{r-1} m \pl 1$ begins.

Each row can be \emph{live} or not and \emph{active} or not. Row $0$ is live in
timesteps $1 \ldots 32 m$ and row $r \ge 1$ is live in timesteps $2^{r-1} m \pl
1 \ldots 2^r 32 m$. Live rows require space; once a row is no longer live we can
free up the space it used. Row $0$ is active in timesteps $1 \ldots 32 m$ and
row $r \ge 1$ is active in timesteps $2^r 16 m \pl 1 \ldots 2^r 32 m$. This
definition means that exactly one row $r\p{t}$ is active in any given timestep
$t$. Any queries that are asked in timestep $t$ are answered by $G_{r\p{t}}$.
Given query $\rho$, we ask $G_{r\p{t}}$ for $\rho/2^{r\p{t}} 32$ and return the
result.

At each timestep $t$, when item $x_t$ arrives, it is fed as the next item in the
suffix stream $B_r$ for each live row $r$. $B_r$ joined with $R_r$ defines the
joined input stream $J_r$. For $r \ge 1$, $J_r$ is downsampled to the sample
stream $S_r$ by sampling each item independently with probability $1/2^r 32$.
For row $0$, no downsampling is performed, so $S_0 = J_0$. Lastly, $S_r$ is fed
into $G_r$.

Appendix \ref{sec:appendix-online-alg-diagram} shows the operation of and the
communication between the first six rows. Solid arrows indicate continuous
streams and dashed arrows indicate one-time messages. Appendix
\ref{sec:appendix-online-alg-listing} is a pseudocode listing of the algorithm.

\subsection{Error analysis}
\label{sec:online-error}

Define $C_r = x\p{2^r 32 m \pl 1}, x\p{2^r 32 m \pl 2}, \ldots$ and $Y_r$ to be
$R_r$ followed by $B_r$ and then $C_r$. That is, $Y_r$ is just the continuation
of $J_r$ for the entire length of the input stream.

Fix some time $t$. All of our claims will be relative to time $t$; that is, if
we write $S_r$ we mean $S_r\p{t}$. Our error analysis proceeds as follows. We
start by proving that $\rank{y}{Y_r}$ is a good approximation of
$\rank{y}{Y_{r-1}}$ when certain conditions hold for $S_{r-1}$. By induction,
this means that $\rank{y}{Y_r}$ is a good approximation of $\rank{y}{X \eq Y_0}$
when the conditions hold for all of $S_0 \ldots S_{r-1}$, and actually it's
enough for the conditions to hold for just $S_{r - \log 1/\ep} \ldots S_{r-1}$
to get a good approximation. Having proven this claim, we then prove that the
result $y = y\p{\rho}$ of a query to our summary has $\rank{y}{X}$ close to
$\rho$. Lastly, we show that $m = \OO{\poly{1/\ep}}$ suffices to ensure that the
conditions hold for $S_{r - \log 1/\ep} \ldots S_{r-1}$ with very high
probability ($1 - e^{-1/\ep}$).

\begin{lemma}
  \label{lem:online-error-indstep}
  Let $\eva{r}$ be the event that $\abs{S_r} > 2 m$ and let $\evb{r}$ be the
  event that any of the first $\le 2 m$ samples $z$ in $S_r$ has $\abs{2^r 32
    \rank{z}{S_r} - \rank{z}{Y_r}} > \ep t / 8$. Say that $S_r$ is \emph{good}
  if neither $\eva{r}$ nor $\evb{r}$ occur (or if $r = 0$).

  For all $r \ge 1$ such that $t \ge t_r = 2^{r-1} m$, and for all items $y$, if
  $S_{r-1}$ is good then $\abs{\rank{y}{Y_r} - \rank{y}{Y_{r-1}}} \le 2^r \ep
  m$.
\end{lemma}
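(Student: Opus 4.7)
The plan is to reduce the claim to a statement about only the first $2^{r-1} m$ items. Both $Y_r$ and $Y_{r-1}$ are streams of length $t$ whose positions $2^{r-1} m \pl 1, \ldots, t$ consist of exactly the same suffix $x_{2^{r-1} m \pl 1}, \ldots, x_t$ of the real input: for $Y_r$ these are the items arriving into $B_r$ and $C_r$, and for $Y_{r-1}$ they are the later items of $B_{r-1}$ and $C_{r-1}$. Writing $N \eq 2^{r-1} m$, these identical tails contribute identically to both ranks, so the lemma reduces to showing
\[
  \abs{\rank{y}{R_r} - \rank{y}{Y_{r-1}(N)}} \;\le\; 2 \ep N.
\]

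I would next unpack $R_r$'s structure. Row $r \m 1$ constructs it at time $N$ by asking $G_{r-1}$ the $8/\ep$ queries $\rho_1 < \cdots < \rho_{8/\ep}$ to get answers $y_1, \ldots, y_{8/\ep}$, then duplicating each $y_q$ exactly $w \eq 2^{r-1} \ep m / 8 \eq \ep N / 8$ times; so $\rank{y}{R_r} \eq w \cdot \abs{\setc{q}{y_q \le y}}$. An ``ideal'' $R_r$ would place $y_q$ at the true $\p{2^{r-1} \cdot 32}\rho_q$th rank position of $Y_{r-1}(N)$, and such an ideal $R_r$ already approximates $Y_{r-1}(N)$ in rank to within the single-block width $w$. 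What remains is to bound how far each actual $y_q$'s true rank in $Y_{r-1}(N)$ can drift from its target. I would chain two approximations: the GK guarantee on $G_{r-1}$ gives $\abs{\rank{y_q}{S_{r-1}(N)} - \rho_q} \le \ep \abs{S_{r-1}(N)}/8 \le \ep m / 4$ (using $\abs{S_{r-1}} \le 2m$ from $\eva{r-1}$ not occurring), and $\evb{r-1}$ not occurring gives $\abs{2^{r-1} \cdot 32 \rank{y_q}{S_{r-1}} - \rank{y_q}{Y_{r-1}}} \le \ep t / 8$, which together pin each $y_q$'s true rank in $Y_{r-1}$ to within $O(\ep N)$ of the target $\p{2^{r-1} \cdot 32}\rho_q$.

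Putting it together, the count $\abs{\setc{q}{y_q \le y}}$ differs from the ``ideal'' count (the number of target ranks $\le \rank{y}{Y_{r-1}(N)}$) by only a constant number of indices since each $y_q$ is only $O(\ep N)$-close to its target; multiplying by $w$ gives $O(\ep N)$, and added to the quantization term $w$ yields the desired $2 \ep N \eq 2^r \ep m$. The main obstacle I expect is precisely that bookkeeping: getting the GK error, the Bernoulli-sampling error, and the block width $w$ to combine to at most $2^r \ep m$ and not a larger multiple. A secondary subtlety is that $\evb{r-1}$ is stated at the ``current'' time $t$, but $R_r$ was built from $G_{r-1}$ at the earlier time $N$, so some care is needed to argue that the time-$t$ rank bounds still control the $y_q$'s that were extracted at time $N$.
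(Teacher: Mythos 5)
Your plan matches the paper's proof essentially step for step: both reduce to time $t_r = 2^{r-1}m$ by noting that $Y_r$ and $Y_{r-1}$ share the identical tail $x_{t_r+1},\ldots,x_t$, both unpack $R_r$ as the $y(\rho_q)$'s each duplicated $\ep t_r/8$ times, both combine the GK guarantee with the Bernoulli bound to pin $\rank{y(\rho_q)}{Y_{r-1}(t_r)}$ near $2^{r-1}32\rho_q$, and both then sandwich $\rank{y}{Y_{r-1}(t_r)}$ between the bounds for the two consecutive $y(\rho_q)\le y<y(\rho_{q+1})$ against the exactly quantized $\rank{y}{Y_r(t_r)}$. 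The timing subtlety you flag is real and is resolved in the paper by (silently) taking the goodness hypothesis to refer to $S_{r-1}(t_r)$ rather than $S_{r-1}(t)$, which is the form actually invoked in the downstream corollaries.
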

\begin{proof}
  At the end of time $t_r$ we have $Y_r\p{t_r} = R_r\p{t_r}$, which is each item
  $y\p{\rho_q}$ in $Q_{r-1}$ duplicated $\ep t_r / 8$ times. If $S_{r-1}\p{t_r}$
  is good then by theorem \ref{thm:bernoulli} and the GK guarantee we have that
  $\abs{\rank{y\p{\rho_q}}{Y_{r-1}\p{t_r}} - 2^{r-1} 32 \rho_q} \le \ep t_r / 2$.

  Fix $q$ so that $y\p{\rho_q} \le y < y\p{\rho_{q+1}}$, where $y\p{\rho_0}$ and
  $y\p{\rho_{1 \pl 8/\ep}}$ are defined to be $\min X_t$ and $\sup \DD$ for
  completeness. Fixing $q$ this way implies that $\rank{y}{Y_r\p{t_r}} = 2^{r-1}
  32 \rho_q$. By the above bound on $\rank{y\p{\rho_q}}{Y_{r-1}\p{t_r}}$ we also
  have that $2^{r-1} 32 \rho_q - \ep t_r / 2 \le \rank{y}{Y_{r-1}\p{t_r}} <
  2^{r-1} 32 \rho_{q+1} + \ep t_r / 2$.

  Putting these two bounds together, and recalling that $\rho_q = q \ep m /
  512$, we find that $\abs{\rank{y}{Y_r\p{t_r}} - \rank{y}{Y_{r-1}\p{t_r}}} \le
  2^r \ep m$. For each time $t$ after $t_r$, the new item $x_t$ changes the rank
  of $y$ in both streams $Y_r$ and $Y_{r-1}$ by the same additive offset, so
  $\abs{\rank{y}{Y_r} - \rank{y}{Y_{r-1}}} = \abs{\rank{y}{Y_r\p{t_r}} -
    \rank{y}{Y_{r-1}\p{t_r}}} \le 2^r \ep m$.
\end{proof}

By applying this lemma inductively we can bound the difference between $Y_r$ and
$X = Y_0$:

\begin{corollary}
  For all $r \ge 1$ such that $t \ge t_r = 2^{r-1} m$, if all of $S_0\p{t_1},
  S_1\p{t_2}, \ldots,$ $S_{r-1}\p{t_r}$ are good, then $\abs{\rank{y}{Y_r} -
    \rank{y}{X}} \le 2^{r+1} \ep m$.
\end{corollary}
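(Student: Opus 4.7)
The plan is to prove the corollary by induction on $r$, using Lemma \ref{lem:online-error-indstep} as the inductive step and the fact that $Y_0 = X$ by definition. The key observation is that each application of the lemma bounds a single-step discrepancy $|\rank{y}{Y_i} - \rank{y}{Y_{i-1}}|$ by $2^i \ep m$, and telescoping via the triangle inequality yields the claimed bound.

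Concretely, I would fix a time $t$ and proceed as follows. For the base case $r = 1$: since $t \ge t_1$ and $S_0(t_1)$ is good, Lemma \ref{lem:online-error-indstep} gives $|\rank{y}{Y_1} - \rank{y}{Y_0}| \le 2 \ep m$, and since $Y_0 = X$, this equals $|\rank{y}{Y_1} - \rank{y}{X}| \le 2 \ep m \le 2^{2}\ep m$. For the inductive step, assume the corollary holds for $r-1$. If $S_0(t_1), \ldots, S_{r-1}(t_r)$ are all good and $t \ge t_r$, then in particular $t \ge t_{r-1}$ and $S_0(t_1), \ldots, S_{r-2}(t_{r-1})$ are all good, so the inductive hypothesis gives $|\rank{y}{Y_{r-1}} - \rank{y}{X}| \le 2^r \ep m$. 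Applying the lemma at level $r$ (valid since $t \ge t_r$ and $S_{r-1}(t_r)$ is good) gives $|\rank{y}{Y_r} - \rank{y}{Y_{r-1}}| \le 2^r \ep m$. Combining via the triangle inequality yields $|\rank{y}{Y_r} - \rank{y}{X}| \le 2^r \ep m + 2^r \ep m = 2^{r+1} \ep m$.

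Equivalently, one can just sum the per-level bounds directly: $|\rank{y}{Y_r} - \rank{y}{X}| \le \sum_{i=1}^r 2^i \ep m = (2^{r+1} - 2)\,\ep m \le 2^{r+1} \ep m$. I do not anticipate any real obstacles here: the lemma does all the heavy lifting, and the only thing to check is that its hypotheses are available at every level, which is immediate since $t \ge t_r \ge t_i$ for all $i \le r$ and the goodness of each $S_{i-1}(t_i)$ is assumed. The slightly loose constant in the bound ($2^{r+1}$ rather than $2^{r+1} - 2$) is intentional and will make the downstream application of the corollary cleaner.
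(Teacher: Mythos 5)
Your proof is correct and matches the paper's intent exactly: the paper states only ``By applying this lemma inductively we can bound the difference'' and leaves the telescoping-sum argument $\sum_{i=1}^r 2^i \ep m \le 2^{r+1}\ep m$ implicit, which is precisely what you spelled out.
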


To ensure that all of these $S_i$ are good would require $m$ to grow with $n$,
which would be bad. Happily, it is enough to require only the last $\log_2
1/\ep$ sample summaries to be good, since the other items we disregard
constitute only a small fraction of the total stream.

\begin{corollary}
  \label{cor:online-error-lastgood}
  Let $d = \log_2 1/\ep$. For all $r \ge 1$ such that $t \ge t_r = 2^{r-1} m$,
  if all of $S_{r-1}\p{t_r}, \ldots, S_{r-d}\p{t_{r-d+1}}$ are good, then
  $\abs{\rank{y}{Y_r} - \rank{y}{X}} \le 2^{r+2} \ep m$.
\end{corollary}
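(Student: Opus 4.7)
My plan is to decompose the error $|\rank{y}{Y_r} - \rank{y}{X}|$ through the intermediate stream $Y_{r-d}$, bounding one part by telescoping the lemma and the other part by a purely combinatorial length argument that needs no goodness assumption.

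First I would handle the easy case $r \le d$. Here the hypothesis (once we adopt the convention that $S_j$ for $j < 0$ are vacuously good) reduces to the hypothesis of the previous corollary, which already gives $|\rank{y}{Y_r} - \rank{y}{X}| \le 2^{r+1} \ep m \le 2^{r+2} \ep m$. So the interesting case is $r > d$.

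For $r > d$, I would apply Lemma \ref{lem:online-error-indstep} once for each $i = r, r\m 1, \ldots, r\m d\pl 1$: the hypothesis gives $S_{i-1}(t_i)$ good, so $|\rank{y}{Y_i} - \rank{y}{Y_{i-1}}| \le 2^i \ep m$. Telescoping and summing the geometric series yields
\[
\abs{\rank{y}{Y_r} - \rank{y}{Y_{r-d}}} \;\le\; \sum_{i=r-d+1}^{r} 2^i \ep m \;<\; 2^{r+1} \ep m.
\]
This accounts for the last $d$ generations of approximation error. The remaining task is to bound $|\rank{y}{Y_{r-d}} - \rank{y}{X}|$ without any goodness assumption on the earlier sample streams.

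The key observation is that $Y_{r-d}$ and $X$ agree identically on everything past their common prefix of length $2^{r-d-1} m$: by construction $Y_{r-d}$ is $R_{r-d}$ followed by $B_{r-d}$ followed by $C_{r-d}$, and $R_{r-d}$ has the same length $2^{r-d-1} m$ as $A_{r-d} = X(2^{r-d-1} m)$, while the continuation past this prefix is literally the same stream of new arrivals in both. So ranks can differ only because of discrepancies among those first $2^{r-d-1} m$ items, which means $|\rank{y}{Y_{r-d}} - \rank{y}{X}| \le 2^{r-d-1} m = 2^{r-1} \ep m$ using $2^{-d} = \ep$. Combining with the telescoped bound gives $|\rank{y}{Y_r} - \rank{y}{X}| < 2^{r+1} \ep m + 2^{r-1} \ep m = \frac{5}{2} \cdot 2^r \ep m \le 2^{r+2} \ep m$, as claimed.

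The main obstacle, such as it is, was recognizing that the "unbounded tail" of prior rows doesn't need to be controlled probabilistically at all: their replacement prefix has deterministically bounded length $2^{r-d-1} m$, and the very choice $d = \log_2 1/\ep$ converts this length into $\ep \cdot 2^{r-1} m$, perfectly matching the scale of the telescoping error. Everything else is routine telescoping and arithmetic.
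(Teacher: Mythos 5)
Your proof is correct and follows essentially the same route as the paper's: decompose through $Y_{r-d}$, telescope Lemma~\ref{lem:online-error-indstep} over the last $d$ generations to get $\abs{\rank{y}{Y_r} - \rank{y}{Y_{r-d}}} \le 2^{r+1}\ep m$, and bound $\abs{\rank{y}{Y_{r-d}} - \rank{y}{X}}$ combinatorially by the length $2^{r-d-1}m = 2^{r-1}\ep m$ of the replaceable prefix. Your two small refinements over the paper---explicitly disposing of the $r \le d$ case by reducing to the earlier corollary, and giving the slightly sharper prefix bound $2^{r-1}\ep m$ where the paper writes $2^{r}\ep m$---are both sound and do not change the method.
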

\begin{proof}\let\qed\relax
  By lemma \ref{lem:online-error-indstep} we have $\abs{\rank{y}{Y_r} -
    \rank{y}{Y_{r-d}}} \le 2^{r+1} \ep m$. At time $t \ge t_{r-d}$, $Y_{r-d}$
  and $X$ share all except possibly the first $2^{\p{r-d}-1} m = 2^{r-1} m / 2^d
  = 2^{r-1} \ep m$ items. Thus
  \begin{align*}
    \abs{\rank{y}{Y_r} - \rank{y}{X}}
    &\le \abs{\rank{y}{Y_r} - \rank{y}{Y_{r-d}}} + \abs{\rank{y}{Y_{r-d}} - \rank{y}{X}} \\
    &\le 2^{r+1} \ep m + 2^r \ep m
    \\ & \omit\hfill\qedsymbol
  \end{align*}
\end{proof}

We now prove that the if the last several sample streams were good then querying
our summary will give us a good result.

\begin{lemma}
  \label{lem:online-error-querygood}
  Let $d = \log_2 \frac{1}{\ep}$ and $r = r\p{t}$. If all $S_r\p{t},
  S_{r-1}\p{t_r}, \ldots, S_{r-d}\p{t_{r-d+1}}$ are good, then querying our
  summary with rank $\rho$ (= querying the active GK summary $G_r$ with $\rho /
  2^r 32$) returns $y = y\p{\rho}$ such that $\abs{\rank{y}{X} - \rho} \le \ep
  t$.
\end{lemma}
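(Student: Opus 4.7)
The plan is to chain together three error sources: the GK summary error on $G_r$, the Bernoulli sampling error from $S_r$, and the cumulative replacement error captured by Corollary~\ref{cor:online-error-lastgood}. Each of the three pieces should contribute at most $\ep t / 2$, $\ep t / 8$, and $\ep t / 4$ respectively, and the triangle inequality then closes the bound (with slack to spare).

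First, I would invoke the GK guarantee for $G_r$: since we query it at rank $\rho / (2^r 32)$ and receive $y$, we have $|\rank{y}{S_r} - \rho/(2^r 32)| \le (\ep/8)|S_r|$. Goodness of $S_r(t)$ gives $|S_r| \le 2m$ (event $\eva{r}$ did not occur), so multiplying through by $2^r 32$ yields $|2^r 32 \rank{y}{S_r} - \rho| \le 2^r 8 \ep m$. Since row $r$ is active at time $t$, we have $t \ge 2^r 16 m$, i.e.\ $2^r m \le t/16$, which converts this to $\le \ep t / 2$.

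Next, I would use the other half of goodness ($\evb{r}$ did not occur): because $y$ is one of the (at most $2m$) samples of $S_r$, $|2^r 32 \rank{y}{S_r} - \rank{y}{Y_r}| \le \ep t / 8$. Combining with the previous step by the triangle inequality gives $|\rho - \rank{y}{Y_r}| \le 5 \ep t / 8$. Finally, Corollary~\ref{cor:online-error-lastgood} applied to $S_{r-1}(t_r), \ldots, S_{r-d}(t_{r-d+1})$ yields $|\rank{y}{Y_r} - \rank{y}{X}| \le 2^{r+2} \ep m$, which again becomes $\le \ep t / 4$ after using $2^r m \le t/16$. One more triangle inequality gives $|\rank{y}{X} - \rho| \le 5\ep t/8 + \ep t / 4 = 7 \ep t / 8 \le \ep t$.

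The only real obstacle is bookkeeping: correctly tracking how the scaling by $2^r 32$ interacts with the GK summary's relative error and how the active-row condition $t \ge 2^r 16 m$ converts every bound of the form $2^r \ep m$ into an $\ep t$ bound. There is also a minor subtlety that one must handle the case where $\rho / (2^r 32)$ exceeds $|S_r|$, which (as in the fixed-$n$ summary) costs at most an additional $\ep/8$ relative factor and is absorbed into the slack between $7\ep t/8$ and $\ep t$.
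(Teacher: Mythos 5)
Your proof is correct and takes essentially the same route as the paper's: you decompose the error as $\abs{\rank{y}{X}-\rho} \le \abs{\rank{y}{X}-\rank{y}{Y_r}} + \abs{\rank{y}{Y_r}-2^r 32\,\rank{y}{S_r}} + \abs{2^r 32\,\rank{y}{S_r}-\rho}$, bound the first piece by Corollary~\ref{cor:online-error-lastgood} and the last two by goodness of $S_r$ plus the GK guarantee, exactly as the paper does (the paper merely compresses the last two pieces into a single "$\le \ep t/2$" claim, which is actually $5\ep t/8$ as you compute, but the slack in the first piece covers it either way). You are slightly more careful with constants and explicitly note the rank-truncation subtlety, which the paper handles only in the fixed-$n$ setting of Section 2.3.
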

\begin{proof}
  By corollary \ref{cor:online-error-lastgood} we have $\abs{\rank{y}{Y_r} -
    \rank{y}{X}} \le 2^{r+2} \ep m \le \ep t / 2$. By theorem
  \ref{thm:bernoulli} and the GK guarantee, $\abs{\rank{y}{Y_r} - \rho} \le \ep
  t / 2$.
\end{proof}

Lastly, we prove that $m = \OO{\poly{1/\ep}}$ suffices to ensure that all of
$S_r\p{t},$ $S_{r-1}\p{t_r}, \ldots, S_{r-d}\p{t_{r-d+1}}$ are good with
probability at least $1 - e^{-1/\ep}$.

\begin{lemma}
  \label{lem:online-error-goodprob}
  Let $d = \log_2 1/\ep$ and $r = r\p{t}$. If $m \ge \frac{400000 \ln
    1/\ep}{\ep^2}$ then all of $S_r\p{t}, S_{r-1}\p{t_r}, \ldots,
  S_{r-d}\p{t_{r-d+1}}$ are good with probability at least $1 - e^{-1/\ep}$.
\end{lemma}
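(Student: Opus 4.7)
The plan is to do a union bound over the $d+1$ row indices $i \in \{r-d, \ldots, r\}$ and, for each $i$, bound $\P{\eva{i}}$ and $\P{\evb{i}}$ by direct appeals to Theorem~\ref{thm:bernoulli}. Write $s_i$ for the evaluation time of $S_i$, so that $s_r = t$ and $s_i = t_{i+1} = 2^i m$ for $i < r$. First I would confirm that Theorem~\ref{thm:bernoulli} applies at each $s_i$: row $i$'s input stream $Y_i$ has eventual length $n_i = 2^i \cdot 32 m$ and is sampled at rate $m/n_i$, so the theorem's hypothesis $s_i \ge n_i/64$ reduces to $s_i \ge 2^{i-1} m$. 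For $i < r$ this is $2^i m \ge 2^{i-1} m$, immediate; for $i = r$ it uses $t \ge 2^r \cdot 16 m$, which holds because row $r$ is active at time $t$.

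Given this, I would bound $\P{\eva{i}}$ via part~1 of the theorem: since $s_i \le n_i$, the inclusion $\{|S_i(s_i)| > 2m\} \subseteq \{|S_i(s_i)| > 2 s_i m / n_i\}$ gives $\P{\eva{i}} < \exp(-m/192)$. For $\P{\evb{i}}$ I would condition on $\neg\eva{i}$, so that $S_i(s_i)$ has at most $2m$ samples, apply part~2 of the theorem with tolerance $\ep s_i/8 \le \ep t/8$ to each sample, and union-bound over the samples exactly as in section~\ref{sec:simple}, giving $\P{\evb{i} \cap \neg\eva{i}} \le 4m \exp(-\ep^2 m/12288)$.

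A union bound over the $d+1 \le \log_2(1/\ep)+1$ rows then yields an overall failure probability of at most
$$(d+1)\bigl(\exp(-m/192) + 4m \exp(-\ep^2 m/12288)\bigr),$$
and plugging in $m \ge 400000 \ln(1/\ep)/\ep^2$ reduces the claim to an arithmetic check that this expression is at most $e^{-1/\ep}$. The first summand has exponent $\Theta(\ln(1/\ep)/\ep^2)$ and is smaller than any fixed power of $\ep$, so it is trivially dominated; the second summand's exponent is $\Theta(\ln(1/\ep))$, and it drives the bound once the $4m$ and $d+1$ prefactors are absorbed by the constant in the exponent.

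The only place requiring genuine care is the per-sample step for $\evb{i}$, because Theorem~\ref{thm:bernoulli} is stated for a fixed item $y$ while the samples in $S_i$ are themselves random. I would handle this exactly as section~\ref{sec:simple} implicitly does: either by summing $\P{y \in S_i \text{ and } E_y}$ over fixed items $y \in Y_i(s_i)$, or by discretizing $Y_i(s_i)$ into $O(1/\ep)$ rank-anchored representatives and using monotonicity of rank to transfer deviation control from the representatives to an arbitrary sample. This is the only step beyond plug-and-chug Chernoff arithmetic.
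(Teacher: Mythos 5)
Your overall approach matches the paper's: union-bound over the at most $d+1 \le 1 + \log_2(1/\ep)$ rows and, within each row, over the at most $2m$ samples, bounding each term via Theorem~\ref{thm:bernoulli}. The paper compresses this to two lines, recording $4\ln(1/\ep)\exp(-m/192)$ and $16m\ln(1/\ep)\exp(-\ep^2 m/12288)$ and merging them into $20m\ln(1/\ep)\exp(-\ep^2 m/12288)$ before plugging in $m$. Your extra steps --- verifying that the evaluation time $s_i$ of each $S_i$ clears the $s_i \ge n_i/64$ hypothesis of Theorem~\ref{thm:bernoulli}, and flagging that the theorem controls a fixed item $y$ while the $\le 2m$ samples are random --- address real gaps in the paper's terse exposition, and are handled the way the paper implicitly intends.

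The genuine problem is the final arithmetic, which you wave through and the paper simply asserts. With $m = 400000\ln(1/\ep)/\ep^2$, the dominant term is $20m\ln(1/\ep)\exp(-\ep^2 m/12288) = 8\times 10^6\,\ln^2(1/\ep)\,\ep^{400000/12288 - 2} \approx \ln^2(1/\ep)\,\ep^{30.6}$, which is only \emph{polynomially} small in $\ep$. Your remark that the $4m$ and $d+1$ prefactors ``are absorbed by the constant in the exponent'' produces exactly this shape, $\ep^{\Theta(1)}$ --- but $\ep^{\Theta(1)} > e^{-1/\ep}$ once $\ep$ is small enough that $30\ln(1/\ep) < 1/\ep$ (roughly $\ep < 0.006$). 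So the threshold $m \ge 400000\ln(1/\ep)/\ep^2$ does not in fact yield failure probability $e^{-1/\ep}$; one needs $m = \Omega(1/\ep^3)$ (which still keeps each GK summary at $\OO{\frac{1}{\ep}\log\frac{1}{\ep}}$ words, since only $\log(\ep m)$ enters). This is a bug in the lemma as stated in the paper, and your proof inherits it; carrying out the last step explicitly rather than gesturing at ``absorbing prefactors'' would have exposed that $\ep^{\Theta(1)}$ and $e^{-1/\ep}$ are not comparable in the direction needed.
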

\begin{proof}
  There are at most $1 \pl \log_2 1/\ep \le 4 \ln 1/\ep$ of these summary
  streams total. Theorem \ref{thm:bernoulli} and the union bound give us
  $\P{\text{no $\eva{r}$ occurs}} \le 4 \ln \frac{1}{\ep} \expp{-m/192}$ and
  $\P{\text{no $\evb{r}$ occurs}} \le 16 m \ln \frac{1}{\ep} \expp{-\ep^2
    m/12288}$.

  Together, $P = \P{\text{some $S_r$ is not good}} \le 20 m \ln \frac{1}{\ep}
  \expp{-\ep^2 m/12288}$. It suffices to choose $m \ge \frac{400000 \ln
    1/\ep}{\ep^2}$ to obtain $P \le e^{-1/\ep}$.
\end{proof}

\subsection{Space and time complexity}
\label{sec:online-spacetime}

A minor issue with the algorithm is that, as written in section
\ref{sec:online-alg}, we do not actually have a bound on the worst-case space
complexity of the algorithm; we only have a bound on the space needed at any
given point in time. This issue is due to the fact that there are low
probability events in which $\abs{S_r}$ can get arbitrarily large and the fact
that over $n$ items there are a total of $\OT{\log n}$ sample streams. The space
complexity of the algorithm is $\OO{\max \abs{S_r}}$, and to bound this value
with constant probability using the Chernoff bound appears to require that $\max
\abs{S_r} = \OM{\log \log n}$, which is too big.

Fortunately, fixing this problem is simple. Instead of feeding every sample of
$S_r$ into the GK summary $G_r$, we only feed each next sample if $G_r$ has seen
$< 2 m$ samples so far. That is, we deterministically restrict $G_r$ to
receiving only $2 m$ samples. Lemmas \ref{lem:online-error-indstep} through
\ref{lem:online-error-querygood} condition on the goodness of the sample streams
$S_r$, which ensures that the $G_r$ receive at most $2 m$ samples each, and the
claim of lemma \ref{lem:online-error-goodprob} is independent of the operation
of $G_r$. Therefore, by restricting each $G_r$ to receive at most $2 m$ inputs
we can ensure that the space complexity is deterministically $\OO{\frac{1}{\ep}
  \log \frac{1}{\ep}}$ without breaking our error guarantees.

From a practical perspective, the assumption in the streaming setting is that
new items arrive over the input stream $X$ at a high rate, so both the
worst-case per-item processing time as well as the amortized time to process $n$
items are important. For our per-item time complexity, the limiting factor is
the duplication step that occurs at the end of each time $t_r = 2^{r-1} m$,
which makes the worst-case per-item processing time as large as $\OT{n}$.
%
%
Instead, at time $t_r$ we could generate $Q_{r-1}$ and store it in $\OO{1/\ep}$
words, and then on each arrival $t = 2^{r-1} m \pl 1 \ldots 2^r m$ we could
insert both $x_t$ and also the next item in $R_r$. By the time $t_{r+1} = 2 t_r$
that we generate $Q_r$, all items in $R_r$ will have been inserted into $J_r$.
%
%
%
Thus the worst-case per-item time complexity is $\OO{\frac{1}{\ep}
  T_{\text{GK}}^{\text{max}}}$, where $T_{\text{GK}}^{\text{max}}$ is the
worst-case per-item time to query or insert into one of our GK summaries.
Over $2^r 32 m$ items there are at most $2 m$ insertions into any one GK
summary, so the amortized time over $n$ items in either case is $\OO{\frac{m
    \log n / 32 m}{n} T_{\text{GK}}}$, where $T_{\text{GK}}$ is the amortized
per-item time to query or insert into one of our GK summaries.

The pseudocode listing in appendix \ref{sec:appendix-online-alg-listing}
includes the changes of this section.

\section{Discussion}
\label{sec:discussion}

Our starting point is a very natural idea of Manku et. al. \cite{MRL1999} that
due to subtle technical difficulties saw no further application to the quantiles
problem for sixteen years. This key idea is to downsample the input stream and
feed the resulting sample stream into a deterministic summary data structure
(compare our figure \ref{fig:figure1} with figure 1 on page 254 of
\cite{MRL1999}). At a very high level, we are simply replacing their
deterministic $\OO{\frac{1}{\ep} \log^2 \ep n}$ MRL summary \cite{MRL1998} with
the deterministic $\OO{\frac{1}{\ep} \log \ep n}$ GK summary \cite{GK2001}.
However, as evidenced by the fact that fourteen years after the GK summary was
published the state of the art was the randomized $\OO{\frac{1}{\ep} \log^{3/2}
  \frac{1}{\ep}}$ summary of Agarwal et. al. \cite{ACHPWY2012}
\cite{ACHPWY2013}, adapting this idea to the GK summary without superconstant
overhead is nontrivial.

Our implementation of this idea is conceptually different from the
implementation of Manku et. al. in two respects. First, we use the GK algorithm
strictly as a black box, whereas Manku et. al. peek into the internals of their
MRL algorithm, using its algorithm-specific interface (\textsc{New},
\textsc{Collapse}, \textsc{Output}) rather than the more generic interface
(\textsc{Insert}, \textsc{Query}). At an equivalent level, dealing with the GK
algorithm is already unpleasant. Using the generic interface, our implementation
could just as easily replace the GK boxes in the diagram in appendix
\ref{sec:appendix-online-alg-diagram} with MRL boxes; or, for the bounded
universe model, with boxes running the q-digest summary of Shrivastava et. al.
\cite{SBAS2004}.

The second respect in which our algorithm differs critically from that of Manku
et. al. is that we operate on \emph{streams} rather than on stream \emph{items}.
We use this approach in our proof strategy too; the key step in our error
analysis, lemma \ref{lem:online-error-indstep}, is a statement about (what to us
are) static objects, so we can trade out the complexity of dealing with
time-varying data structures for a simple induction.

The approach we developed to reduce a deterministic summary to a randomized
summary was:
\begin{enumerate*}
\item For a fixed $n$, downsample the input stream, feed the resulting sample
  stream into the deterministic summary, and prove a probabilistic bound.
\item Run an infinite number of copies of step 1, for exponentially growing
  values of $n$.
\item Replace a constant fraction prefix of each copy with an approximation
  generated by the previous copy, and prove using step 1 that this approximation
  probably doesn't cause too much error.
\item Use step 3 inductively to prove a probabilistic bound for the entire
  stream.
\end{enumerate*}
We believe (albeit on the basis of this problem and our algorithm alone) that
developing streaming algorithms that operate on streams rather than on stream
items is likely to be a useful design approach for many problems.

\bibliographystyle{plain}
\bibliography{quantiles}

\begin{thebibliography}{1}

\bibitem{ACHPWY2012}
Pankaj~K. Agarwal, Graham Cormode, Zengfeng Huang, Jeff Phillips, Zhewei Wei,
  and Ke~Yi.
\newblock Mergeable summaries.
\newblock In {\em Proceedings of the 31st Symposium on Principles of Database
  Systems}, PODS '12, pages 23--34, New York, NY, USA, 2012. ACM.

\bibitem{ACHPWY2013}
Pankaj~K Agarwal, Graham Cormode, Zengfeng Huang, Jeff~M Phillips, Zhewei Wei,
  and Ke~Yi.
\newblock Mergeable summaries.
\newblock {\em ACM Transactions on Database Systems (TODS)}, 38(4):26, 2013.

\bibitem{GK2001}
Michael Greenwald and Sanjeev Khanna.
\newblock Space-efficient online computation of quantile summaries.
\newblock In {\em ACM SIGMOD Record}, volume~30, pages 58--66. ACM, 2001.

\bibitem{HT2010}
Regant~YS Hung and Hingfung~F Ting.
\newblock An$\backslash$ omega ($\backslash$ frac $\{$1$\}$$\{$$\backslash$
  varepsilon$\}$$\backslash$ log$\backslash$ frac $\{$1$\}$$\{$$\backslash$
  varepsilon$\}$) space lower bound for finding $\varepsilon$-approximate
  quantiles in a data stream.
\newblock In {\em Frontiers in Algorithmics}, pages 89--100. Springer, 2010.

\bibitem{MRL1998}
Gurmeet~Singh Manku, Sridhar Rajagopalan, and Bruce~G Lindsay.
\newblock Approximate medians and other quantiles in one pass and with limited
  memory.
\newblock In {\em ACM SIGMOD Record}, volume~27, pages 426--435. ACM, 1998.

\bibitem{MRL1999}
Gurmeet~Singh Manku, Sridhar Rajagopalan, and Bruce~G Lindsay.
\newblock Random sampling techniques for space efficient online computation of
  order statistics of large datasets.
\newblock In {\em ACM SIGMOD Record}, volume~28, pages 251--262. ACM, 1999.

\bibitem{MP1978}
J.~I. Munro and M.~S. Paterson.
\newblock Selection and sorting with limited storage.
\newblock In {\em Proceedings of the 19th Annual Symposium on Foundations of
  Computer Science}, SFCS '78, pages 253--258, Washington, DC, USA, 1978. IEEE
  Computer Society.

\bibitem{SBAS2004}
Nisheeth Shrivastava, Chiranjeeb Buragohain, Divyakant Agrawal, and Subhash
  Suri.
\newblock Medians and beyond: new aggregation techniques for sensor networks.
\newblock In {\em Proceedings of the 2nd international conference on Embedded
  networked sensor systems}, pages 239--249. ACM, 2004.

\end{thebibliography}


\appendix

\newpage
\section{Diagram for online algorithm}
\label{sec:appendix-online-alg-diagram}

\begin{figure}[H]
    \includegraphics[angle=90,scale=.91]{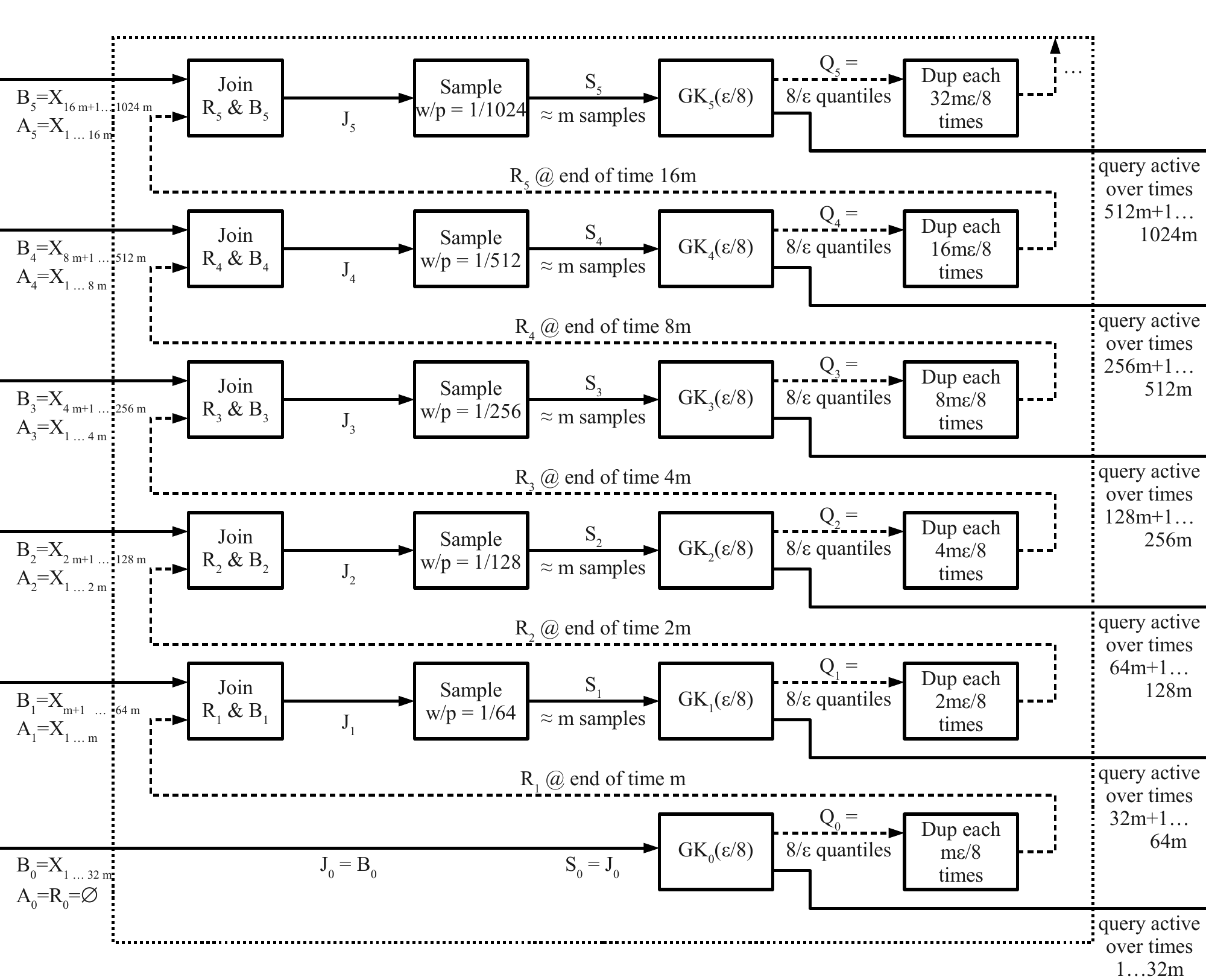}
    \centering
    \caption{Each row $r$ has its own copy $G_r$ of the GK algorithm that
      approximates its input to $\ep/8$ relative error. $A_r$ is the prefix
      stream of row $r$, $B_r$ is its suffix stream, $R_r$ is its prefix stream
      replacement (generated by the previous row), $J_r$ is the joint stream
      $R_r$ followed by $B_r$, $S_r$ is its sample stream, and $Q_r$ is a
      one-time stream generated from $G_r$ at time $2^r m$ to get the
      replacement prefix $R_{r+1}$.}
    \label{fig:figure2}
\end{figure}

\newpage
\section{Pseudocode for online algorithm}
\label{sec:appendix-online-alg-listing}

The differences in the algorithms of sections \ref{sec:online-alg} and
\ref{sec:online-spacetime} are marked.

\begin{algorithm}[H]
  \DontPrintSemicolon
  \SetAlgoNoEnd
  \SetAlgoNoLine
  \LinesNumbered
  \SetKwIF{On}{}{}{on}{do}{}{}{}
  \SetKwIF{Wp}{}{}{with probability}{do}{}{}{}
  %
  Initially, allocate space for $G_0$. Mark row $0$ as live and active.\;
  \For{$t = 1, 2, \ldots$}{
    \ForEach{live row $r \ge 0$}{
      \Wp{$1/2^r 32$}{
        \If{section \ref{sec:online-alg}}{
          Insert $x_t$ into $G_r$.\;
        }
        \ElseIf{section \ref{sec:online-spacetime}}{
          Insert $x_t$ into $G_r$ if $G_r$ has seen $< 2m$ insertions.\;
          \If{$r \ge 1$ and $2^{r-1} m < t \le 2^r m$ and $G_r$ has seen $< 2m$
            insertions}{
            \Wp{$1/2^r 32$}{
              Also insert item $t \m 2^{r-1} m$ of $R_r$ into $G_r$.\;
            }
          }
        }
      }
    }
    \If{$t = 2^{r-1} m$ for some $r \ge 1$}{
      Allocate space for $G_r$. Mark row $r$ as live.\;
      \If{section \ref{sec:online-alg}}{
        Query $G_{r-1}$ with $\rho_1 \ldots \rho_{8/\ep}$ to get $y_1 \ldots y_{8/\ep}$.\;
        \For{$q = 1 \ldots 8/\ep$}{
          \For{$1 \ldots 2^{r-1} \ep m/8$}{
            \Wp{$1/2^r 32$}{
              Insert $y_q$ into $G_r$.\;
            }
          }
        }
      }
      \ElseIf{section \ref{sec:online-spacetime}}{
        Store $Q_{r-1}$, to implicitly define $R_r$.\;
      }
    }
    \If{$t = 2^r 16 m$ for some $r \ge 1$}{
      Mark row $r$ as active. Unmark row $r \m 1$ as active.\;
    }
    \If{$t = 2^r 32 m$ for some $r \ge 0$}{
      Unmark row $r$ as live. Free space for $G_r$.\;
    }
  }
  \On{query $\rho$}{
    Let $r\p{t}$ be the active row.\;
    Query $G_{r\p{t}}$ for rank $\rho / 2^{r\p{t}} 32$. Return the result.\;
  }
  \caption{Procedural listing of the algorithm.}
  \label{fig:figure3}
\end{algorithm}

\end{document}